\newcommand{\zerov}{{\bf 0}}
\newcommand{\Am}{{\bf A}}
\newcommand{\Bm}{{\bf B}}
\newcommand{\Cm}{{\bf C}}
\newcommand{\Dm}{{\bf D}}
\newcommand{\Hm}{{\bf H}}
\newcommand{\Id}{{\bf I}}
\newcommand{\Mm}{{\bf M}}
\newcommand{\Rm}{{\bf R}}
\newcommand{\Tm}{{\bf T}}
\newcommand{\Wm}{{\bf W}}
\newcommand{\Xm}{{\bf X}}
\newcommand{\Ym}{{\bf Y}}
\newcommand{\Ac}{{\cal A}}
\newcommand{\Bc}{{\cal B}}
\newcommand{\Cc}{{\cal C}}
\newcommand{\Gc}{{\cal G}}
\newcommand{\Mc}{{\cal M}}
\newcommand{\Nc}{{\cal N}}
\newcommand{\Oc}{{\cal O}}
\newcommand{\Rc}{{\cal R}}
\newtheoremstyle{mystyle}
{}
{}
{\itshape}
{}
{\bf}
{.}
{.5em}
{}
\theoremstyle{mystyle}
\newtheorem{mydef}{Definition}
\newtheorem{myprop}{Proposition}
\begin{document}
\title{A Novel Construction of Multi-group Decodable Space-Time Block Codes}
\author{\IEEEauthorblockN{Amr~Ismail,~\IEEEmembership{IEEE Student Member}, Jocelyn~Fiorina,~\IEEEmembership{IEEE Member} and~Hikmet~Sari,~\IEEEmembership{IEEE Fellow}}\\
\IEEEauthorblockA{Telecommunications Department, SUPELEC, F-91192 Gif-sur-Yvette, France\\
Email:$\lbrace$amr.ismail, jocelyn.fiorina, and hikmet.sari$\rbrace$@supelec.fr\\}}

\maketitle
\begin{abstract}
Complex Orthogonal Design (COD) codes are known to have the lowest detection complexity among Space-Time Block Codes (STBCs). However, the rate of square COD codes decreases exponentially with the number of transmit antennas. The Quasi-Orthogonal Design (QOD) codes emerged to provide a compromise between rate and complexity as they offer higher rates compared to COD codes at the expense of an increase of decoding complexity through partially relaxing the orthogonality conditions. The QOD codes were then generalized with the so called $g$-symbol and $g$-group decodable STBCs where the number of orthogonal groups of symbols is no longer restricted to two as in the QOD case. However, the adopted approach for the construction of such codes is based on sufficient but not necessary conditions which may limit the achievable rates for any number of orthogonal groups. In this paper, we limit ourselves to the case of  Unitary Weight (UW)-$g$-group decodable STBCs for $2^a$ transmit antennas where the weight matrices are required to be single thread matrices with non-zero entries $\in\lbrace\pm 1,\pm j\rbrace$ and address the problem of finding the highest achievable rate for any number of orthogonal groups. This special type of weight matrices guarantees full symbol-wise diversity and subsumes a wide range of existing codes in the literature. We show that in this case an exhaustive search can be applied to find the maximum achievable rates for UW-$g$-group decodable STBCs with $g>1$. For this purpose, we extend our previously proposed approach for constructing UW-2-group decodable STBCs based on necessary and sufficient conditions to the case of UW-$g$-group decodable STBCs in a recursive manner. 
\end{abstract}

\begin{IEEEkeywords}
Space-time block codes, low-complexity decodable codes, $g$-group decodable codes, Clifford algebra.
\end{IEEEkeywords}

\section{Introduction}
\IEEEPARstart{S}{pace}-Time Block Codes (STBCs) were originally proposed as a low-complexity alternative to Space-Time Trellis Codes (STTCs) \cite{STTC} which suffered from a prohibitively high decoding complexity. By the decoding complexity we mean the minimum number of times an exhaustive search decoder has to compute the ML metric to optimally estimate the transmitted symbols codeword \cite{FAST}. The STBCs are characterized by their linearity over the field of real numbers as the transmitted code matrix can be expressed as a linear weighted combination of real information symbols. Moreover, the STBCs may efficiently exploit the MIMO channel degrees of freedom and diversity. 

The first proposed STBC was unquestionably the elegant two transmit antennas diversity scheme known afterwards as \textit{the Alamouti code} \cite{ALAMOUTI}. In fact, the Alamouti code enables separate decoding of each complex symbol giving rise to a decoding complexity that only grows linearly with the underlying constellation size for general constellations, and it can be effectively decoded at a constant complexity irrespectively of the underlying constellation size through hard PAM slicers in the case of rectangular QAM constellations such as 4/16-QAM. 

In an attempt to generalize Alamouti's low-complexity scheme for a number of transmit antennas greater than two, the well-known family of low-complexity decodable codes namely orthogonal STBCs has been proposed \cite{HR_ORTH,CLIFF_ORTH}. Unfortunately, the rate of square orthogonal STBCs decays exponentially with the number of transmit antennas \cite{CLIFF_ORTH}, which makes them more suitable to low-rate communications. Arguably, the first proposed low-complexity rate-1 code for the case of four transmit antennas is the Quasi-Orthogonal (QO)STBC originally proposed by H. Jafarkhani \cite{QOSTBC} and later optimized through constellation rotation to provide full diversity \cite{FULL_DIVERSITY_QOD,OPT_QORTH_ROT}. The QOSTBC partially relaxes the orthogonality conditions by allowing two complex symbols to be jointly detected. Subsequently, rate-one, full-diversity QOSTBCs were proposed for an arbitrary number of transmit antennas that subsume the original QOSTBC as a special case \cite{GEN_QOSTBC}. In this general framework, the \textit{quasi-orthgonality} stands for decoupling the transmitted symbols into two groups of the same size. 

However, STBCs with lower decoding complexity may be obtained through the concept of multi-group decodability laid by the S. Karmakar \textit{et al.} in \cite{MULTI_SYMs,MULT_GR}. Indeed, the multi-group decodability generalizes the quasi-orthogonality by allowing the codeword of symbols to be decoupled into more than two groups not necessarily of the same size. Moreover, one can obtain rate-one, full-diversity 4-group decodable STBCs for an arbitrary number of transmit antennas \cite{SAST}.

On the other hand, the adopted approach for the construction of the $g$-symbol (resp. $g$-group)  decodable codes (namely the Clifford Unitary Weight (CUW)-$g$-symbol \cite{MULTI_SYMs} (resp. $g$-group \cite{MULT_GR}) decodable codes) is based on sufficient but not necessary conditions which may limit the achievable rate for any number of orthogonal groups. In \cite{COLL&DIST} the authors have found the maximal achievable rate for the CUW-$\lambda$-symbol decodable codes but only for the case where the number of symbols $\lambda$ in each group can be expressed as $2^a,\ a\in \mathbb{N}$. In \cite{2SYMS_BND} the authors proved that the rate of arbitrary $2^a \times 2^a$ UW-Single-symbol decodable STBCs is upper bounded by $\frac{a}{2^{a-1}}$. Consequently, the question on the maximum achievable rate for an arbitrary number of orthogonal groups remains open. 
 
In this paper, we limit ourselves to a special case of UW-$g$-group decodable STBCs for $2^a$ transmit antennas where the weight matrices are required to be single thread matrices with non-zero entries $\in \lbrace\pm 1,\pm j\rbrace$ and address the problem of finding the highest achievable rate for any number of orthogonal groups. This special type of weight matrices guarantees full symbol-wise diversity \cite{SWDIV} and subsumes a wide range of existing codes in the literature (a non-exhaustive list includes \cite{GEN_QOSTBC}-\cite{2G_2m}). For this purpose we extend the approach proposed in \cite{GLOB10} for constructing UW-2-group decodable STBCs based on necessary and sufficient conditions to the case of UW-$g$-group decodable STBCs in a recursive manner. 

The major idea is that contrary to what was done in \cite{NUM_QORTH}, we are dealing with the $\pmb{\Lambda}_{kl}$ matrix where $\pmb{\Lambda}_{kl}=\Am_k\Bm_l^H$ instead of dealing directly with the weight matrices $\Am_k$ and $\Bm_l$. This approach reduced the number of candidate  $\pmb{\Lambda}_{kl}$ matrices that can be used for the construction of the UW-$g$-group decodable STBCs, as they have to satisfy additional properties over those of the weight matrices. Then, the vector space representation is used to build the $\pmb{\Lambda}_{kl}$ matrices and we show that the number of candidate $\pmb{\Lambda}_{kl}$ matrices becomes limited (see Appendix). A search routine can then be applied to find existing UW-$g$-group decodable codes at given rates which enables us to determine the maximum achievable rate for an arbitrary number of orthogonal groups. UW-$g$-group decodable codes for a number of transmit antennas that is not a power of two can be easily obtained by the removal of an appropriate number of columns of the code matrix corresponding to the nearest greater number of transmit antennas that is a power of two.

The paper is organized as follows: In the next section, preliminaries about $g$-group decodable codes are provided. In Section III, we reformulate the $g$-group decodability conditions in terms of the $\pmb{\Lambda}$ matrices. Section IV addresses the construction of UW-$g$-group decodable codes. In Section V, we present the results of the exhaustive search of UW-$g$-group decodable codes for four transmit antennas based on the construction method developed in the former section, and finally we give our conclusions in Section VI. 
  
\subsection*{Notations}
In this paper, small letters denote scalar variables, bold small letters are used to designate vectors, and bold capital letters are used to designate matrices. If $\Am$ is a matrix, then $\Am^H$ and $\Am^T$ denote the hermitian and the transpose of $\Am$, respectively. $\Id$ and $\zerov$ denote the identity and the null matrices, respectively. The $\text{tr}\left(\Am\right)$, $\text{det}\left(\Am\right)$ and $\Vert\Am\Vert_F$ denote the trace, the determinant and the Frobenius norm respectively of $\Am$. $\left(m\right)_k$ means $m$ modulo $k$, $\Mc_n$ is the set of $n\times n$ complex matrices and for a finite set $\Ac$, $\vert\mathcal{A}\vert$ denotes its cardinality.

\section{Preliminaries}
We define the MIMO channel input-output relationship as: 
\begin{equation}
\underset{T\times N_r}{\Ym} =\underset{T\times N_t}{\Xm} \underset{N_t\times N_r}{\Hm} +\underset{T\times N_r}{\Wm}
\label{model}  
\end{equation} 
where $T$ is the codeword signalling period, $N_r$ is the number of receive antennas, $N_t$ is the number of transmit antennas, $\Ym$ is the received signal matrix, $\Xm$ is the code matrix, $\Hm$ is the channel coefficients matrix with entries $h_{kl} \sim \Cc \Nc(0,1)$, and $\Wm$ is the noise matrix with entries $w_{ij} \sim \Cc \Nc(0,N_{0})$. According to the above model, the $t$'th row of $\Xm$ denotes the symbols transmitted through the $N_t$ transmit antennas during the $t$'th channel use while the $n$'th column denotes the symbols transmitted through the $n$'th transmit antenna during the codeword signalling period $T$.

A STBC matrix $\Xm$ that encodes $2K$ real symbols can be expressed as a linear combination of the transmitted symbols as \cite{LDSTBC}:
\begin{equation}
\Xm=\sum^{2K}_{k=1} \Am_{k}x_k
\label{LD}
\end{equation}
with the symbols $x_k \in \mathbb{R}$, and the $\Am_k, k=1,...,2K$ are $T \times N_t$ complex matrices called dispersion or weight matrices that are required to be linearly independent over $\mathbb{R}$.

Multi-group decodable STBCs are designed to significantly reduce the decoding complexity by allowing separate detection of disjoint groups of symbols without any loss of performance. This is achieved iff the ML metric can be expressed as a sum of terms involving independent groups of symbols. This suggests the following definition: 
\begin{mydef}A STBC that encodes $2K$ real symbols is said to be $g$-group decodable iff its ML metric can be expressed as a sum of $g$ terms that depend on disjoint subsets of the transmitted symbols \cite{MULTI_SYMs,MULT_GR}.
\end{mydef}
The conditions to be satisfied by the weight matrices of a STBC in order to be $g$-group decodable are  derived in \cite{MULTI_SYMs,MULT_GR} from the ML decision rule of the system model \eqref{model} and are outlined here for self-completeness. Assuming that perfect CSI is available at the receiver side, the ML estimated codeword is given by:
\begin{equation}
\begin{split}
\Xm^{\text{ML}}&=\text{arg}\ \underset{\Xm \in \Cc}{\text{min}}\ \Vert \Ym-\Xm \Hm \Vert^2_F\\
&=\text{arg}\ \underset{\Xm \in \Cc}{\text{min}}\ \text{tr}\left(\left(\Ym-\Xm \Hm \right)^H\left(\Ym-\Xm \Hm \right)\right)  
\end{split}
\label{frobenius} 
\end{equation}
where $\Cc$ denotes the codebook. If $\Xm$ can be expressed as a sum of sub-codes $\Xm_i,\ i=1,\ldots g$ such that:
\begin{equation}
\Xm=\sum^{g}_{i=1} \Xm_i,\ \Xm^H_i\Xm_j+\Xm^H_j\Xm_i=\zerov ,\ 1\leq i\neq j \leq g
\label{subcodes}
\end{equation}
\eqref{frobenius} reduces to:
\begin{equation*}
\begin{split}
&\text{arg}\ \underset{\Xm \in \Cc}{\text{min}}\ \text{tr}\left(\Ym^H\Ym -\sum^{g}_{i=1} \Ym^H\Xm_i\Hm-\Hm^H\Xm^H_i\Ym+\Hm^H\Xm^H_i\Xm_i\Hm\right)\\
&=\sum^{g}_{i=1}\text{arg}\ \underset{\Xm_i \in \Cc_i}{\text{min}}\ \text{tr}\left(\Ym^H\Ym - \Ym^H\Xm_i\Hm-\Hm^H\Xm^H_i\Ym+\Hm^H\Xm^H_i\Xm_i\Hm\right)\\
&-\sum^{g-1}_{i=1}\text{tr}\left(\Ym^H\Ym\right)
\end{split}
\end{equation*}
where $\Cc_i$ denotes the codebook of the $i$'th sub-code. Noting that the last term of the above is constant for a given received signal matrix, the ML decision rule may be expressed as:
\begin{equation}
\Xm^{\text{ML}}=\sum^{g}_{i=1}\text{arg}\ \underset{\Xm_i \in \Cc_i}{\text{min}}\ \Vert \Ym-\Xm_i\Hm \Vert^2_F.
\end{equation}
In terms of weight matrices, it is straightforward to verify that \eqref{subcodes} is equivalent to:
\begin{equation}
\begin{split}
\Am_k^{H}\Am_l+\Am_l^{H}\Am_k=\pmb{0},\ \forall \Am_k \in \Gc_i,\ \Am_l \in \Gc_j,\\ 1\leq i\neq j\leq g,\ \vert \Gc_i \vert=n_i,\sum^{g}_{i=1}n_i=2K.
\end{split}
\label{g-group}
\end{equation}
where $\Gc_i$ is the set of weight matrices associated to the $i$'th group of symbols. If a STBC that encodes $2K$ real symbols is $g$-group decodable, its decoding complexity order can be reduced from $M^K$ to $\sum^{g}_{i=1}\sqrt{M}^{n_i}$ where $M$ is the size of the used square QAM constellation. If a real Sphere Decoder (SD) , $g$-group decodability reduces to splitting the original tree with $2K$ levels to $g$ smaller trees each with $n_i$ levels. The decoding complexity order can be further reduced to $\sum^{g}_{i=1}\sqrt{M}^{n_i-1}$ if the conditional detection with hard slicer is employed. In the special case of orthogonal STBCs, the decoding complexity is $\Oc(1)$, as the PAM slicers need only a fixed number of arithmetic operations irrespectively of the square QAM constellation size. 
 
Besides its induced significant reduction in the worst-case decoding complexity, the multi-group decodability structure enables a simplified coding gain optimization as the global coding gain optimization problem turns into the optimization of the individual coding gain of each sub-code, which is illustrated in the following proposition:
\begin{myprop} If a STBC $\Xm$ is $g$-group decodable as:
\begin{equation*}
\Xm=\sum^{g}_{i=1} \Xm_i,\ \Xm^H_i\Xm_j+\Xm^H_j\Xm_i=\zerov ,\ 1\leq i\neq j \leq g
\end{equation*}
its coding gain $\delta_{\Xm}$ is expressed as:
\begin{equation*}
\delta_{\Xm}=\rm{min} \left\lbrace \delta_{\Xm_1},\delta_{\Xm_2},\ldots,\delta_{\Xm_g}\right\rbrace 
\end{equation*} 
where $\delta_{\Xm_i}$ denotes the coding gain of $i$'th sub-code.
\end{myprop}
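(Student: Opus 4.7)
The plan is to view the claim through the rank--determinant criterion, so that $\delta_{\Xm}$ is the minimum of $\det\bigl((\Delta\Xm)^{H}\Delta\Xm\bigr)$ (possibly after an $N_t$-th root, which is irrelevant to the $\min$-structure) taken over all ordered pairs of distinct codewords, where $\Delta\Xm=\Xm-\Xm'$. Setting $\Delta\Xm_i=\Xm_i-\Xm_i'$ yields the decomposition $\Delta\Xm=\sum_{i=1}^{g}\Delta\Xm_i$. The target is the pair of inequalities $\delta_{\Xm}\leq\delta_{\Xm_i}$ for every $i$ and $\delta_{\Xm}\geq\min_i\delta_{\Xm_i}$.

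The key algebraic step is to transfer the sub-code orthogonality from codewords to codeword \emph{differences}. Because each $\Xm_i$ is an $\mathbb{R}$-linear combination of the weight matrices in $\Gc_i$, the condition \eqref{g-group} (which is equivalent to \eqref{subcodes}) is $\mathbb{R}$-bilinear in the symbols and therefore carries over to differences:
\begin{equation*}
\Delta\Xm_i^{H}\Delta\Xm_j+\Delta\Xm_j^{H}\Delta\Xm_i=\zerov,\qquad i\neq j.
\end{equation*}
Expanding $(\Delta\Xm)^{H}\Delta\Xm=\sum_{i,j}\Delta\Xm_i^{H}\Delta\Xm_j$ and cancelling the cross terms in pairs collapses this to the central identity
\begin{equation*}
(\Delta\Xm)^{H}\Delta\Xm=\sum_{i=1}^{g}\Delta\Xm_i^{H}\Delta\Xm_i,
\end{equation*}
a sum of $g$ Hermitian positive semidefinite matrices.

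The upper bound $\delta_{\Xm}\leq\delta_{\Xm_i}$ is obtained by restricting to codeword pairs that agree on all sub-codes except the $i$-th (feasible since the $g$ sub-codes are parameterized by disjoint subsets of the real symbols $x_k$): for such pairs $\Delta\Xm=\Delta\Xm_i$, so the overall minimum is at most $\delta_{\Xm_i}$. For the lower bound I would invoke Loewner monotonicity of the determinant on the PSD cone---if $\Am\succeq\Bm\succeq\zerov$ then $\det(\Am)\geq\det(\Bm)$---which, applied to the identity above, gives $\det\bigl((\Delta\Xm)^{H}\Delta\Xm\bigr)\geq\det\bigl(\Delta\Xm_{i^{\star}}^{H}\Delta\Xm_{i^{\star}}\bigr)\geq\delta_{\Xm_{i^{\star}}}\geq\min_k\delta_{\Xm_k}$ for any index $i^{\star}$ with $\Delta\Xm_{i^{\star}}\neq\zerov$ (at least one exists since $\Xm\neq\Xm'$). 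Combining the two bounds yields the claimed equality. The only mildly nontrivial ingredient is the PSD-monotonicity of $\det$; the rest is routine bookkeeping, and I expect no real obstacle beyond carefully verifying that the orthogonality of \eqref{g-group} really does descend to differences.
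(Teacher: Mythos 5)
Your proof is correct, and it reaches the paper's conclusion through the same skeleton (pass to differences, collapse the cross terms to get $(\Delta \Xm)^{H}\Delta\Xm=\sum_{i=1}^{g}\Delta\Xm_i^{H}\Delta\Xm_i$, then bound the determinant of the sum from below and exhibit single-sub-code differences to match the bound from above), but the determinant inequality you invoke is genuinely different from the paper's. The paper applies Minkowski's determinant inequality $\left(\det(\Am+\Bm)\right)^{1/n}\geq\left(\det\Am\right)^{1/n}+\left(\det\Bm\right)^{1/n}$ to obtain the superadditive bound $\det\bigl(\sum_i\Delta\Xm_i^{H}\Delta\Xm_i\bigr)\geq\sum_i\det\bigl(\Delta\Xm_i^{H}\Delta\Xm_i\bigr)$ and then argues equality is attained in the ``trivial'' case where all but one $\Delta\Xm_i$ vanish; you instead use Loewner monotonicity of $\det$ on the positive semidefinite cone to drop all but one summand, giving $\det\bigl((\Delta\Xm)^{H}\Delta\Xm\bigr)\geq\det\bigl(\Delta\Xm_{i^{\star}}^{H}\Delta\Xm_{i^{\star}}\bigr)\geq\delta_{\Xm_{i^{\star}}}$. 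Your route is arguably cleaner on two counts: Minkowski's inequality as cited requires positive \emph{definite} arguments, whereas the Gram matrices $\Delta\Xm_i^{H}\Delta\Xm_i$ can be singular (indeed some are zero), a point the paper glosses over; and your explicit two-sided argument (upper bound from pairs agreeing on all but one group of symbols, lower bound from monotonicity) replaces the paper's somewhat terse ``equality holds for the trivial case'' remark. You also make explicit, via $\mathbb{R}$-bilinearity, why the orthogonality condition descends from codewords to codeword differences, which the paper uses silently. The only implicit hypothesis in your upper bound --- that for each $i$ there exist distinct codewords differing only in the $i$-th group --- is guaranteed by the disjoint parameterization of the sub-codes, as you note, so there is no gap.
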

\begin{proof}
The proof follows directly from \eqref{subcodes} and the Minkowski determinant inequality \cite{MATRIX_ANALYS}. Recalling the coding gain definition, one has:
\begin{equation*}
\begin{split}
\delta_{\Xm}&=\underset{\underset{\Xm,\Xm'\in\Cc}{\Xm\neq\Xm'}}{\text{min}}\ \text{det} \left(\left(\Xm-\Xm'\right)^H\left(\Xm-\Xm'\right)\right)\\
&=\underset{\Delta \Xm\in\Delta\Cc/\left\lbrace \zerov\right\rbrace}{\text{min}}\ \text{det} \left(\left(\Delta \Xm\right)^H\left(\Delta \Xm\right)\right)  
\end{split}
\end{equation*}  
Thanks to \eqref{subcodes}, the above reduces to:
\begin{equation}
\delta_{\Xm}=\underset{\Delta \Xm\in\Delta\Cc/\left\lbrace \zerov\right\rbrace}{\text{min}}\ \text{det}\left(\sum^{g}_{i=1} \Delta\Xm^H_i\Delta\Xm_i\right)
\end{equation}
but from Minkowski's determinant inequality \cite{MATRIX_ANALYS} one can write:
\begin{equation}
\left(\text{det}\left(\Am+\Bm\right)\right)^{1/n}\geq \left( \text{det}\left(\Am\right)\right)^{1/n} +\left( \text{det}\left(\Bm\right)\right)^{1/n}  
\label{minkowski}
\end{equation}
where $\Am,\Bm \in \Mc_n$ are positive definite matrices. Therefore:
\begin{equation*}
\begin{split}
\text{det}\left(\sum^{g}_{i=1} \Delta\Xm^H_i\Delta\Xm_i\right)\geq &\left(\sum^{g}_{i=1} \left(\text{det}\left(\Delta\Xm^H_i\Delta\Xm_i\right)\right)^{1/n}\right)^n\\
=& \sum^{g}_{i=1} \text{det}\left(\Delta\Xm^H_i\Delta\Xm_i\right)+C\\
\geq & \sum^{g}_{i=1} \text{det}\left(\Delta\Xm^H_i\Delta\Xm_i\right) 
\end{split}
\end{equation*}
where the last inequality follows from the fact that $C \geq 0$. Equality holds for the trivial case 
$\Delta \Xm=\zerov$ or $\Delta \Xm=\Delta \Xm_k$ and $\Delta \Xm_i=\zerov\ \forall 1\leq i \neq k \leq g$. Thus we have:
\begin{equation}
\begin{split}
\delta_{\Xm}=&\text{min}\Big\lbrace \underset{\Delta \Xm_1\in\Delta\Cc_1/\left\lbrace \zerov\right\rbrace}{\text{min}}\text{det}\left(\Delta\Xm^H_1\Delta\Xm_1\right),\\
&\ldots,\underset{\Delta \Xm_g\in\Delta\Cc_g/\left\lbrace \zerov\right\rbrace}{\text{min}}\text{det}\left(\Delta\Xm^H_g\Delta\Xm_g\right)\Big\rbrace\\ 
=&\text{min} \left\lbrace \delta_{\Xm_1},\delta_{\Xm_2},\ldots,\delta_{\Xm_g}\right\rbrace 
\end{split}
\end{equation}
which concludes the proof.
\end{proof}
If the weight matrices are all unitary, the code is called Unitary Weight (UW)-$g$-group decodable, and if $n_1=n_2=\ldots=n_g$, the $g$-group decodable code is said to be symmetric.   

\section{Necessary and sufficient conditions for $g$-group decodability}
\subsection{UW-2-group decodability}
In order to construct a UW-2-group decodable code which transmits $2K$ real symbols, we must find $2$ sets of unitary weight matrices namely $(\Gc_1,\Gc_2)$ that are linearly independent over $\mathbb{R}$ such that $n_1+n_2=2K$, and each pair of weight matrices belonging to different sets must satisfy \eqref{g-group}. In the following, we will reformulate the problem in a way which allows exhaustive search of weight matrices for 2-groups decodable codes. We will call the $k$'th weight matrix of the first group $\Am_k$ and the $l$'th weight matrix of the second group $\Bm_l$.

Multiplying \eqref{g-group} from the left by $\Am_k$ and from the right by $\Bm^H_l$:
\begin{equation}
\Am_k\left(\Am^H_k\Bm_l+\Bm^H_l\Am_k\right)\Bm^H_l=\zerov 
\end{equation} 
which can be written as:
\begin{equation}
(\underbrace{\Am_k\Bm^H_l}_{\pmb{\Lambda}_{kl}})^2=-\Id_T
 \label{lambdakl_1}
\end{equation}
where  $\Id_T$ is the $T\times T$ identity matrix. Thus $\pmb{\Lambda}_{kl}$ must be a unitary matrix squaring to $-\Id_T$.

$\pmb{\Lambda}_{kl}$ can be expressed as:
\begin{eqnarray}
\pmb{\Lambda}_{kl}=\Am_k\Bm^H_l\nonumber&=&\Am_k\Bm^H_{1}\Bm_{1}\Bm^H_l\nonumber\\
&=&\pmb{\Lambda}_{k1}\Bm_{1}\Am^H_{1}\Am_{1}\Bm^H_l\nonumber\\
&=&\pmb{\Lambda}_{k1}\pmb{\Lambda}^H_{11}\pmb{\Lambda}_{1l}\nonumber\\
&=&-\pmb{\Lambda}_{k1}\pmb{\Lambda}_{11}\pmb{\Lambda}_{1l}
\label{lambdakl_2}
\end{eqnarray}
In the last step, we used the fact that a unitary matrix that squares to $-\Id$ is anti-hermitian.
\begin{myprop}
A code is said to be UW-2-group decodable code iff:
\begin{enumerate}
\item $\Gamma=\left\lbrace \pmb{\Lambda}_{11},\pmb{\Lambda}_{k1},\pmb{\Lambda}_{1l}:2 \leq k\leq n_1,2 \leq l\leq n_2\right\rbrace$ is a set of unitary matrices that square to $-\Id$;                                                     
\item $\left(\pmb{\Lambda}_{k1}\pmb{\Lambda}_{11}\pmb{\Lambda}_{1l}\right)^2=-\Id\ \forall\ 2\leq k\leq n_1,\ 2\leq l\leq n_2$;
\item the set $\left\lbrace \pmb{\Lambda}_{k1},\pmb{\Lambda}_{1l}\pmb{\Lambda}_{11}:1\leq k\leq n_1,1\leq l\leq n_2\right\rbrace$ is linearly independent over $\mathbb{R}$.
\end{enumerate}
where $n_i=\vert\Gc_i\vert$.
\end{myprop}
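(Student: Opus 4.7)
The plan is to prove both directions by leveraging the identities \eqref{lambdakl_1} and \eqref{lambdakl_2} already derived in the excerpt, together with the fact that unitarity of the weight matrices makes several implications reversible.

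For necessity, I would start from the UW-2-group decodability conditions \eqref{g-group}. As the paper already shows, multiplying the orthogonality condition on the left by $\Am_k$ and on the right by $\Bm_l^H$ forces $\pmb{\Lambda}_{kl}^2=-\Id_T$ for every pair $(k,l)$; applied to $(1,1)$, $(k,1)$, and $(1,l)$ this is exactly the first condition on $\Gamma$. The second condition then follows by substituting the identity $\pmb{\Lambda}_{kl}=-\pmb{\Lambda}_{k1}\pmb{\Lambda}_{11}\pmb{\Lambda}_{1l}$ established in \eqref{lambdakl_2} into $\pmb{\Lambda}_{kl}^2=-\Id$. For the third condition, I would observe that the $\Am_k$ and $\Bm_l$ are by hypothesis linearly independent over $\mathbb{R}$, so multiplying every weight matrix on the right by the fixed unitary matrix $\Bm_1^H$ preserves linear independence. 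This yields the families $\Am_k\Bm_1^H=\pmb{\Lambda}_{k1}$ and $\Bm_l\Bm_1^H=\pmb{\Lambda}_{1l}^H\pmb{\Lambda}_{11}=-\pmb{\Lambda}_{1l}\pmb{\Lambda}_{11}$, where I use both $\pmb{\Lambda}_{1l}^H=-\pmb{\Lambda}_{1l}$ and $\Bm_l=\pmb{\Lambda}_{1l}^H\Am_1$; signs are irrelevant for $\mathbb{R}$-linear independence, delivering condition 3.

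For sufficiency, the strategy is to construct an explicit set of weight matrices from the given $\pmb{\Lambda}$ matrices. Fix any unitary $T\times N_t$ matrix $\Bm_1$ and define $\Am_1=\pmb{\Lambda}_{11}\Bm_1$, $\Am_k=\pmb{\Lambda}_{k1}\Bm_1$ for $k\geq 2$, and $\Bm_l=\pmb{\Lambda}_{1l}^H\Am_1$ for $l\geq 2$. Unitarity of each weight matrix follows from the unitarity of the $\pmb{\Lambda}$'s in $\Gamma$. To verify the 2-group decodability condition $\Am_k^H\Bm_l+\Bm_l^H\Am_k=\zerov$ for every pair, I would recall that, because the weight matrices are unitary, this condition is equivalent to $(\Am_k\Bm_l^H)^2=-\Id$. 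For the pairs $(1,1)$, $(k,1)$, $(1,l)$ this is immediate from condition 1. For $k,l\geq 2$, a direct computation gives $\Am_k\Bm_l^H=\pmb{\Lambda}_{k1}\Bm_1\Am_1^H\pmb{\Lambda}_{1l}=\pmb{\Lambda}_{k1}\pmb{\Lambda}_{11}^H\pmb{\Lambda}_{1l}=-\pmb{\Lambda}_{k1}\pmb{\Lambda}_{11}\pmb{\Lambda}_{1l}$, which squares to $-\Id$ exactly by condition 2.

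Finally, linear independence of the constructed $\{\Am_k,\Bm_l\}$ over $\mathbb{R}$ reduces, by multiplying on the right by the invertible $\Bm_1^{-1}=\Bm_1^H$, to the linear independence of $\{\pmb{\Lambda}_{k1},-\pmb{\Lambda}_{1l}\pmb{\Lambda}_{11}\}$, which is condition 3 up to irrelevant signs. The one delicate step I anticipate is being careful with the ``only if'' direction of the equivalence between $\Am_k^H\Bm_l+\Bm_l^H\Am_k=\zerov$ and $(\Am_k\Bm_l^H)^2=-\Id$: the excerpt only derives one direction, so I would spell out that when both $\Am_k$ and $\Bm_l$ are unitary the converse is obtained by multiplying $(\Am_k\Bm_l^H)^2=-\Id$ by $\Am_k^H$ on the left and $\Bm_l$ on the right. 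This closes the loop and establishes the proposition.
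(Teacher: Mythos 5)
Your proposal is correct and takes essentially the same route as the paper's proof: conditions 1 and 2 are read off from the identities \eqref{lambdakl_1} and \eqref{lambdakl_2}, and condition 3 is handled by right-multiplying the weight matrices by the unitary $\Bm_1^H$, which preserves $\mathbb{R}$-linear independence and maps them onto $\pmb{\Lambda}_{k1}$ and $-\pmb{\Lambda}_{1l}\pmb{\Lambda}_{11}$. The only difference is one of completeness: you make the sufficiency direction explicit by reconstructing the weight matrices from $\Gamma$ and supplying the converse of the equivalence between $\Am_k^H\Bm_l+\Bm_l^H\Am_k=\zerov$ and $\left(\Am_k\Bm_l^H\right)^2=-\Id$ (via left/right multiplication by $\Am_k^H$ and $\Bm_l$), a step the paper leaves implicit when it simply asserts that the first two conditions are "necessary and sufficient."
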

\begin{proof}
The first and the second conditions are necessary and sufficient in order to satisfy \eqref{g-group} and follow directly from \eqref{lambdakl_1} and \eqref{lambdakl_2}, respectively. The last condition is necessary and sufficient to guarantee that the weight matrices are linearly independent over $\mathbb{R}$. We will prove the last condition by proving that the linear dependence of the weight matrices over $\mathbb{R}$ implies that the set $\left\lbrace \pmb{\Lambda}_{k1},\pmb{\Lambda}_{1l}\pmb{\Lambda}_{11}:1\leq k\leq n_1,1\leq l\leq n_2\right\rbrace$ is linearly dependent over $\mathbb{R}$ and vice versa. To this end let us suppose that:
\begin{equation}
\sum^{n_1}_{k=1}a_k\Am_k+\sum^{n_2}_{l=1}b_l\Bm_l=\zerov
\label{LD}
\end{equation}
where $\lbrace a_k,b_l:1\leq k \leq n_1,1\leq l \leq n_2\rbrace \in \mathbb{R}$. Right multiplying the above by $\Bm_1^H$, we obtain:
\begin{eqnarray}
\eqref{LD}&\Leftrightarrow&\left(\sum^{n_1}_{k=1}a_k\Am_k+\sum^{n_2}_{l=1}b_l\Bm_l\right)\Bm^H_1=\zerov\\
&=&\sum^{n_1}_{k=1}a_k\pmb{\Lambda}_{k1}+\sum^{n_2}_{l=1}b_l\Bm_l\Bm^H_1=\zerov\\
&\Leftrightarrow&\sum^{n_1}_{k=1}a_k\pmb{\Lambda}_{k1}+\sum^{n_2}_{l=1}b_l\pmb{\Lambda}_{1l}^H\Am_1\Am^H_1\pmb{\Lambda}_{11}=\zerov\\
&=&\sum^{n_1}_{k=1}a_k\pmb{\Lambda}_{k1}-\sum^{n_2}_{l=1}b_l\pmb{\Lambda}_{1l}\pmb{\Lambda}_{11}=\zerov
\end{eqnarray}
which means that the weight matrices are linearly independent over $\mathbb{R}$ iff the set\\ $\left\lbrace \pmb{\Lambda}_{k1},\pmb{\Lambda}_{1l}\pmb{\Lambda}_{11}:1\leq k\leq n_1,1\leq l\leq n_2\right\rbrace$ is linearly independent over $\mathbb{R}$.\end{proof}
In order to construct UW-2-group decodable codes, we will search for matrices satisfying \textbf{Proposition 2}. Once we have a set $\Gamma$ that satisfies \textbf{Proposition 2}, the corresponding UW-2-group decodable code is built as follows. First, we choose an arbitrary unitary matrix $\Am_1$ then the STBC weight matrices are obtained according to:
\begin{equation}
\begin{split}
\Bm_l &= \pmb{\Lambda}^H_{1l}\Am_1;\ 1\leq l\leq n_2\\
\Am_k &= \pmb{\Lambda}_{k1}\Bm_1;\ 2\leq k\leq n_1
\end{split}
\end{equation} 
which means that for a given set $\Gamma$ the corresponding UW-2-group decodable code is not unique.
\subsection{UW-$g$-group decodability}
In order to expand the above approach to UW-$g$-group decodable codes, it is worth noting that any $g$-group decodable code can be seen as a 2-group decodable code. This means that we can search for $g$-group decodable codes by iteratively searching for 2-group decodable codes. For instance if we search for a 3-group decodable code, where $n_1,n_2$ and $n_3$ are the number of weight matrices (or alternatively real symbols) in the first, second and third group respectively, then we can proceed in two steps as follows:
\begin{enumerate}[I-]
\item We search for a 2-group decodable code with $n_1$ and $(n_2+n_3)$ real symbols in the first and second group, respectively.
\item Among the found second group with $(n_2+n_3)$  weight matrices, we will search for $2$-group 
decodable codes with $n_2$ and $n_3$ weight matrices in the first and second group, respectively.
\end{enumerate}
Mathematically speaking, in the first step we search for all the sets: \[\Gamma=\left\lbrace \pmb{\Lambda}_{11},\pmb{\Lambda}_{k1},\pmb{\Lambda}_{1l}:2 \leq k\leq n_1,2 \leq l\leq (n_2+n_3)\right\rbrace\] that satisfy \textbf{Proposition 2}. In the second step, we will search among the sets $\left\lbrace \pmb{\Lambda}_{1l}:1 \leq l\leq (n_2+n_3)\right\rbrace$ for sets that can be divided into two groups of $n_2$ and $n_3$ matrices.
\begin{myprop}
A code is said to be UW-$g$-group decodable code iff:
\begin{enumerate}
\item $\Gamma=\left\lbrace\pmb{\Lambda}_{11},\pmb{\Lambda}_{k1},\pmb{\Lambda}_{1l}:2\leq k\leq n_1,2\leq l\leq \sum^{g}_{i=2}n_i\right\rbrace$ is a set of unitary matrices squaring to $-\Id$;
\item $\left(\pmb{\Lambda}_{k1}\pmb{\Lambda}_{11}\pmb{\Lambda}_{1l}\right)^2=-\Id\ \forall\ 2\leq k\leq n_1,\ 2\leq l\leq \sum^{g}_{i=2}n_i$;
\item The set $\left\lbrace \pmb{\Lambda}_{k1},\pmb{\Lambda}_{1l}\pmb{\Lambda}_{11}:1\leq k\leq n_1, 1\leq l\leq \sum^{g}_{i=2}n_i\right\rbrace$ is linearly independent over $\mathbb{R}$;
\item $\pmb{\Lambda}_{1l}\pmb{\Lambda}_{1l'}=-\pmb{\Lambda}_{1l'}\pmb{\Lambda}_{1l}\ \forall \sum^{L-1}_{i=2}n_i+1\leq l \leq \sum^{L}_{i=2}n_i, \sum^{L'-1}_{j=2}n_j+1\leq l' \leq \sum^{L'}_{j=2}n_j,\ 2\leq L\neq L'\leq g$.
\end{enumerate}
where $n_i=\vert\Gc_i\vert$.
\end{myprop}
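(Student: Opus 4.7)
The plan is to reduce the $g$-group case to a two-level recursion built on Proposition 2: view the partition $\{\Gc_1,\ldots,\Gc_g\}$ as a coarse binary split $(\Gc_1,\Gc_2\cup\cdots\cup\Gc_g)$ together with the pairwise orthogonalities that remain inside the second bucket. The cross-group relations \eqref{g-group} split into two disjoint families: those with at least one matrix in $\Gc_1$, and those with two matrices in distinct $\Gc_L,\Gc_{L'}$ with $2\le L\ne L'\le g$. I will show the first family is equivalent to items (1)--(3) and the second to item (4), so that their conjunction exactly matches UW-$g$-group decodability.

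For the first family I would invoke Proposition 2 verbatim on the binary split with group sizes $(n_1,\sum_{i=2}^{g}n_i)$, indexing the second bucket by $l=1,\ldots,\sum_{i=2}^{g}n_i$. Its three conditions specialise, mutatis mutandis, to items (1), (2), (3) of the present statement. A point worth flagging is that the linear independence clause (3) already refers to the entire set of $2K$ weight matrices; no further independence condition is needed after partitioning the $\Bm_l$'s into $\Gc_2,\ldots,\Gc_g$, because refining a grouping cannot destroy linear independence.

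For the second family I would mimic the derivation of \eqref{lambdakl_1}. Since each $\Bm_l$ is unitary, left-multiplying $\Bm_l^H\Bm_{l'}+\Bm_{l'}^H\Bm_l=\zerov$ by $\Bm_l$ and right-multiplying by $\Bm_{l'}^H$ gives $(\Bm_l\Bm_{l'}^H)^2=-\Id$. Substituting $\Bm_l=\pmb{\Lambda}_{1l}^H\Am_1$ (from $\pmb{\Lambda}_{1l}=\Am_1\Bm_l^H$ and unitarity of $\Am_1$) and using that every $\pmb{\Lambda}_{1l}$ is anti-hermitian by item (1), the factor $\Am_1\Am_1^H$ in the middle collapses and one obtains $\Bm_l\Bm_{l'}^H=\pmb{\Lambda}_{1l}^H\pmb{\Lambda}_{1l'}=-\pmb{\Lambda}_{1l}\pmb{\Lambda}_{1l'}$. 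The orthogonality relation thus becomes $(\pmb{\Lambda}_{1l}\pmb{\Lambda}_{1l'})^2=-\Id$, and a one-line manipulation using $\pmb{\Lambda}_{1l}^{-1}=-\pmb{\Lambda}_{1l}$ and $\pmb{\Lambda}_{1l'}^{-1}=-\pmb{\Lambda}_{1l'}$ shows, in both directions, that this is equivalent to the clean anticommutation $\pmb{\Lambda}_{1l}\pmb{\Lambda}_{1l'}=-\pmb{\Lambda}_{1l'}\pmb{\Lambda}_{1l}$ of item (4).

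The hard part will be bookkeeping rather than algebra: one must verify that the two families above are disjoint and together exhaust \eqref{g-group}, and that the index ranges $\sum_{i=2}^{L-1}n_i+1\le l\le\sum_{i=2}^{L}n_i$ in item (4) correctly demarcate the subgroups. The only genuinely new computation is the anticommutation equivalence, which is routine given item (1); notably, no triple-product condition analogous to item (2) arises at the recursive level, because the factor $\Am_1\Am_1^H$ cancels by unitarity, leaving a product of only two $\pmb{\Lambda}_{1\cdot}$ matrices.
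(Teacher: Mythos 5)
Your proposal is correct and follows essentially the same route as the paper: items (1)--(3) are obtained by applying Proposition 2 to the binary split $(\Gc_1,\cup_{m=2}^{g}\Gc_m)$, and item (4) is derived from the remaining inter-group orthogonality relations via the identity $\pmb{\Lambda}_{1l}=\Am_1\Bm_l^H$ together with anti-hermitianity. The only cosmetic difference is that the paper conjugates $\Bm_l^H\Bm_{l'}+\Bm_{l'}^H\Bm_l=\zerov$ directly by $\Am_1(\cdot)\Am_1^H$ to reach the anticommutation, whereas you pass through the intermediate squaring condition $\left(\pmb{\Lambda}_{1l}\pmb{\Lambda}_{1l'}\right)^2=-\Id$ and then unwind it; both manipulations are reversible and rest on the same facts.
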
 
\begin{proof}
The first three conditions are the same as in \textbf{Proposition 2}, and therefore we need only to prove the last one. Consider a UW-2-group decodable code with $\Gc_1=\lbrace\Am_1,\ldots,\Am_{n_1}\rbrace$ and $\overset{g}{\underset{m=2}{\cup}}\Gc_m=\lbrace\Bm_1,\ldots,\Bm_{\sum^{g}_{i=2}n_i}\rbrace$, such that:
\begin{equation}
\Bm_l^{H}\Bm_{l'}+\Bm^{H}_{l'}\Bm_l=\zerov,\ \forall\ \Bm_l \in \Gc_L,\ \Bm_{l'} \in \Gc_{L'},\ 2\leq L\neq L'\leq g.
\end{equation} 
By left and right multiplying the above by $\Am_1$ and $\Am^H_1$ respectively, we obtain: 
\begin{eqnarray}
\Am_1\Bm_l^H\Bm_{l'}\Am^H_1+\Am_1\Bm^H_{l'}\Bm_l\Am^H_1=\zerov\\
\pmb{\Lambda}_{1l}\pmb{\Lambda}^H_{1l'}+\pmb{\Lambda}_{1l'}\pmb{\Lambda}^H_{1l}=\zerov\\
\pmb{\Lambda}_{1l}\pmb{\Lambda}_{1l'}+\pmb{\Lambda}_{1l'}\pmb{\Lambda}_{1l}=\zerov.
\end{eqnarray}
which concludes the proof.
\end{proof} 
The weight matrices are obtained as in the UW-2-group decodable code. We first choose an arbitrary unitary matrix $\Am_1$ and then:
\begin{equation}
\begin{split}
\Bm_l &= \pmb{\Lambda}^H_{1l}\Am_1;\ 1\leq l\leq \sum^{g}_{i=2}n_i\\
\Am_k &= \pmb{\Lambda}_{k1}\Bm_1;\ 2\leq k\leq n_1
\end{split}
\end{equation}


\section{Construction of the matrices in $\Gamma$}
Recognizing the key role of the matrix representations of the generators of the Clifford Algebra over $\mathbb{R}$ in the sequel of this paper, we briefly review their main properties in the following. 
\subsection{Linear Representations of Clifford generators}
The defining relation of the generators of the Clifford algebra over $\mathbb{R}$ is:
\begin{equation}
\gamma_i\gamma_j+\gamma_j\gamma_i=-2\delta_{ij}\mathds{1}
\label{cliff_gen}
\end{equation} 
The above equation can be split in two equations:
\begin{eqnarray}
\gamma_i^2&=&-\mathds{1}\\
\gamma_i\gamma_j&=&-\gamma_j\gamma_i\ \forall i\neq j.
\end{eqnarray}
In \cite{CLIFF_ORTH}, the question about the maximum number of unitary representations of the Clifford generators has been thoroughly addressed and it has been proven that for $2^a \times 2^a$ matrices there is exactly $2a+1$ unitary matrix representations of the Clifford algebra generators. The matrix representations of $\gamma_i$ denoted $\Rc\left(\gamma_i\right)$ for the $2^a \times 2^a$ case are obtained as \cite{CLIFF_ORTH}:
\begin{eqnarray}
\Rc(\gamma_1)&=&\pm j \underbrace{\sigma_3 \otimes \sigma_3\ldots \otimes\sigma_3}_{a}\nonumber\\
\Rc(\gamma_2)&=&\Id_{2^{a-1}}\otimes \sigma_1\nonumber\\
\Rc(\gamma_3)&=&\Id_{2^{a-1}}\otimes \sigma_2\nonumber\\
&\vdots&\nonumber\\
\Rc(\gamma_{2k})&=&\Id_{2^{a-k}}\otimes \sigma_1\underbrace{\otimes \sigma_3\otimes \sigma_3\ldots\otimes \sigma_3}_{k-1}\nonumber\\
\Rc(\gamma_{2k+1})&=&\Id_{2^{a-k}}\otimes \sigma_2\underbrace{\otimes \sigma_3\otimes \sigma_3\ldots\otimes \sigma_3}_{k-1}\nonumber\\
&\vdots&\nonumber\\
\Rc(\gamma_{2a})&=&\sigma_1\otimes \underbrace{\sigma_3 \otimes \sigma_3\ldots\sigma_3}_{a-1}\nonumber\\
\Rc(\gamma_{2a+1})&=&\sigma_2\otimes \underbrace{\sigma_3 \otimes \sigma_3\ldots\sigma_3}_{a-1}
\label{reps}
\end{eqnarray}
where
\begin{equation}
\sigma_1=\begin{bmatrix}0&1\\-1&0\end{bmatrix},\ \sigma_2=\begin{bmatrix}0&j\\j&0\end{bmatrix},\ \sigma_3=\begin{bmatrix}1&0\\0&-1\end{bmatrix}
\label{sigmas}
\end{equation}
From now on, we will denote $\Rc(\gamma_i)$ by $\Rm_i$ for simplicity. The properties of the matrices $\Rm_i,\ i=1,\ldots,2a+1$ can then be summarized as follows:
\begin{equation}
\begin{split}
\Rm_i^{H}=-\Rm_i;\ \Rm_i^2=-\Id;&\ \text{and}\ \Rm_i\Rm_j+\Rm_j\Rm_i=\pmb{0},\\ \forall 1\leq i&\neq j\leq 2a+1.
\end{split}
\label{properties}
\end{equation}
However, one has from \cite{ANTICOMMUTING} that if $\left\lbrace \Mm_k:k=1,\ldots,2a\right\rbrace$ are pairwise anti-commuting matrices that square to a scalar, then the set:
\begin{equation*}
\begin{split}
\left\lbrace\Id\right\rbrace \cup&\left\lbrace \Mm_k:k=1,\ldots,2a\right\rbrace\\
\overset{2a}{\underset{m=2}{\cup}}&\left\lbrace \prod^{m}_{i=1}\Mm_{k_i}:1\leq k_1 < k_2 \ldots < k_m \leq 2a\right\rbrace
\end{split}
\end{equation*}
forms a basis of $\Mc_{2^a}$ over $\mathbb{C}$. Consequently, thanks to the properties of the matrix representations of Clifford algebra generators \eqref{properties}, the set of matrices defined in \eqref{Cbasis} forms a basis of $\Mc_{2^a}$ over $\mathbb{C}$:
\begin{equation}
\begin{split}
\left\lbrace\Id\right\rbrace \cup&\left\lbrace \Rm_k:k=1,\ldots,2a\right\rbrace\\
\overset{2a}{\underset{m=2}{\cup}}&\left\lbrace \prod^{m}_{i=1}\Rm_{k_i}:1\leq k_1 < k_2 \ldots < k_m \leq 2a\right\rbrace
\end{split}
\label{Cbasis}
\end{equation}
We require that all the basis elements are anti-hermitian in order to facilitate the search of matrices in $\Gamma$ as will be shown shortly. This may be achieved by replacing $\Id$ by $j\Id$ and multiplying the basis elements by $j^{\delta\left(m\right)}$ where $\delta\left(m\right)=\frac{\left(\left(m\right) _4-1\right)\left(\left( m\right)_4-2\right)}{2}$ which does not alter the linear independence over $\mathbb{C}$. Therefore the new basis is: 
\begin{equation}
\begin{split}
j\Id\cup&\left\lbrace \Rm_k:k=1,\ldots,2a\right\rbrace\\
\overset{2a}{\underset{m=2}{\cup}}&\left\lbrace j^{\delta(m)}\prod^{m}_{i=1}\Rm_{k_i}:1\leq k_1 < k_2 \ldots < k_m \leq 2a\right\rbrace
\end{split}
\label{AHbasis} 
\end{equation}
The matrices belonging to the above basis may be easily verified to be anti-hermitian by noting that for $\Am=j^{\delta\left(m\right)}\prod^{m}_{i=1}\Rm_{k_i}\vert k_1<k_2\ldots<k_m$, we have:
\begin{equation}
\Am^{H}=(-1)^{\delta\left(m\right)+m(m+1)/2}\Am
\label{ishermitian}
\end{equation}
and it is straightforward to verify that $\delta\left(m\right)+m(m+1)/2$ is odd irrespectively of $m$.\\
For instance, the basis elements of $\Mc_4$ over $\mathbb{C}$ denoted by $(\pmb{\alpha}_i:1 \leq i\leq 16)$ are expressed in Table~\ref{basis_mat}.
\begin{myprop}
The properties of the basis elements of $\Mc_n,n=2^a$ over $\mathbb{C}$ can be summarized as follows:
\begin{enumerate}
\item $\left(\pmb{\alpha}_i\right)^2=-\Id,\ \forall 1\leq i \leq n^2$
\item $\pmb{\alpha}^H_i=-\pmb{\alpha}_i,\ \forall 1\leq i \leq n^2$
\item $\pmb{\alpha_i}\pmb{\alpha_j}=\pm \pmb{\alpha_j}\pmb{\alpha_i},\ \forall i \neq j$
\item $\pmb{\alpha}_k\pmb{\alpha}_l=\lambda\pmb{\alpha}_m$, where $\lambda \in \left\lbrace \pm 1,\pm j\right\rbrace \forall 1\leq k \neq l \leq n^2$ and $2\leq m \leq n^2$.
\end{enumerate}
\end{myprop}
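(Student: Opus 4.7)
My plan is to handle the four properties essentially in the listed order, because property 2 yields property 1 almost for free, and properties 3 and 4 share a single combinatorial argument on products of the generators $\Rm_k$. I would first note that every basis element is unitary: each $\Rm_k$ satisfies $\Rm_k\Rm_k^H = \Id$ by (\ref{properties}), products of unitaries are unitary, and the prefactors $j^{\delta(m)}$ in (\ref{AHbasis}) have unit modulus, so $\pmb{\alpha}_i\pmb{\alpha}_i^H = \Id$ for every basis element.

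For property 2 I would invoke (\ref{ishermitian}) directly, so that the task reduces to checking the parity claim that $\delta(m) + m(m+1)/2$ is odd. I would verify this by a $(m)_4 \in \{0,1,2,3\}$ case table: one has $\delta(m) \in \{1,0,0,1\}$ and $m(m+1)/2 \bmod 2 \in \{0,1,1,0\}$, whose componentwise sum is $(1,1,1,1)$; the special case $\pmb{\alpha}_1 = j\Id$ is anti-Hermitian by inspection. Property 1 then drops out: since $\pmb{\alpha}_i$ is unitary and anti-Hermitian, $\pmb{\alpha}_i^2 = -\pmb{\alpha}_i^H\pmb{\alpha}_i = -\Id$.

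For properties 3 and 4 I would prove a single multiplication lemma. Write two non-scalar basis elements as $\pmb{\alpha}_i = j^{\delta(m_1)}\prod_{p=1}^{m_1}\Rm_{k_p}$ and $\pmb{\alpha}_j = j^{\delta(m_2)}\prod_{q=1}^{m_2}\Rm_{l_q}$ with indices in increasing order. Repeatedly applying $\Rm_a\Rm_b = -\Rm_b\Rm_a$ (for $a\neq b$) and $\Rm_a^2 = -\Id$, I would bring pairs of common indices together and cancel them, obtaining $\pmb{\alpha}_i\pmb{\alpha}_j = \mu\, j^{\delta(m_1)+\delta(m_2)}\prod_i \Rm_{n_i}$, where $\{n_i\} = \{k_p\}\,\triangle\,\{l_q\}$ is the symmetric difference and $\mu \in \{\pm 1\}$ records the parity of the transpositions. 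Counting pairs $(p,q)$ with $k_p\neq l_q$ shows that $\pmb{\alpha}_i\pmb{\alpha}_j = (-1)^{m_1 m_2 - |\{k_p\}\cap\{l_q\}|}\pmb{\alpha}_j\pmb{\alpha}_i$, which proves property 3 (the case where one factor is $j\Id$ is trivial). For property 4, let $m = |\{n_i\}|$ and rewrite $\prod_i \Rm_{n_i} = j^{-\delta(m)}\pmb{\alpha}_{m'}$ where $\pmb{\alpha}_{m'}$ is the basis element indexed by $\{n_i\}$; the restriction $k \neq l$ forces $\{n_i\}\neq\emptyset$, hence $m'\geq 2$. Collecting factors yields $\pmb{\alpha}_k\pmb{\alpha}_l = \mu\, j^{\delta(m_1)+\delta(m_2)-\delta(m)}\pmb{\alpha}_{m'}$, and the scalar is $\pm 1$ times an integer power of $j$, hence lies in $\{\pm 1,\pm j\}$. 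The main obstacle I anticipate is exactly this bookkeeping --- keeping $\mu$ and the integer exponent of $j$ consistent across all index orderings, and separately handling the edge cases where one factor is $j\Id$ (which merely adds $1$ to the exponent of $j$). Once the lemma is set up cleanly, properties 3 and 4 are read off directly.
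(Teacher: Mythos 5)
Your proof is correct. For properties 1--3 you are essentially filling in details the paper leaves implicit: the paper simply asserts that these "follow directly from \eqref{properties}", whereas you make the anti-Hermitian check explicit via the parity of $\delta(m)+m(m+1)/2$ (a fact the paper has already recorded as \eqref{ishermitian}) and then obtain $\pmb{\alpha}_i^2=-\Id$ from unitarity plus anti-Hermitianness. The genuine divergence is in property 4. The paper asserts without computation that $\pmb{\alpha}_k\pmb{\alpha}_l$ equals $\lambda$ times some basis element, and devotes its entire written argument to excluding the identity: it supposes $\pmb{\alpha}_k\pmb{\alpha}_l=\lambda\Id$, rewrites this as $\pmb{\alpha}_k\left(\pmb{\alpha}_l+\lambda\pmb{\alpha}_k\right)=\zerov$, and invokes the full rank of $\pmb{\alpha}_k$ together with the linear independence of the basis to force $\pmb{\alpha}_l=-\lambda\pmb{\alpha}_k$, a contradiction. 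You instead compute the product outright: after reordering and cancelling common generators it is a unit scalar times the basis element indexed by the symmetric difference of the two generator index sets, and since distinct basis elements carry distinct index sets that symmetric difference is nonempty, so the product can never be proportional to $\Id$. Your route costs the sign and $j$-power bookkeeping you anticipate, but it is constructive --- it delivers the exact value of $\lambda$ and the exact commutation sign $(-1)^{m_1m_2-c}$ (with $c$ the number of shared generators) for property 3 --- whereas the paper's argument is shorter and non-constructive, leaning on linear independence as a black box. Both are sound.
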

\begin{proof}
The first three properties follow directly from \eqref{properties}. The latter property can be verified easily from \eqref{properties} and \eqref{AHbasis}. It remains only to verify that the product of any pair of distinct basis elements is not proportional to the identity. Let us suppose that $\pmb{\alpha}_k\pmb{\alpha}_l=\lambda\Id$ with $k \neq l$, then:
\begin{eqnarray}
\pmb{\alpha}_k\pmb{\alpha}_l+\lambda\pmb{\alpha}_k\pmb{\alpha}_k=\zerov\nonumber\\
\pmb{\alpha}_k\left(\pmb{\alpha}_l+\lambda\pmb{\alpha}_k\right)=\zerov
\end{eqnarray}
As all the basis elements are unitary matrices (thus of full rank), the only solution to the above equation is $\pmb{\alpha}_l+\lambda\pmb{\alpha}_k=\zerov$, which contradicts the linear independence property.
\end{proof}

\subsection{Necessary conditions for the matrices in $\Gamma$}
\begin{myprop}
Let the matrix $\pmb{\Lambda}$ be written as a linear combination of the basis elements in \eqref{AHbasis} as below:
\begin{equation}
\pmb{\Lambda}=\sum^{n^2}_{i=1} a_i\pmb{\alpha}_i
\label{rep1}
\end{equation}
Then, $\pmb{\Lambda}$ is unitary and squares to $-\Id$ iff  for $i \in \left\lbrace 1,2,\ldots,n^2\right\rbrace,\ a_i \in \mathbb{R}$  with $\sum^{n^2}_{i=1} a_i^2=1$ and the sum over the product of commuting pairs of basis elements equals to $\zerov$.
\end{myprop}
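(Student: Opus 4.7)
\bigskip

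My plan is to prove both directions by exploiting the structural properties of the basis in \eqref{AHbasis} that were established in \textbf{Proposition 3}: every basis element is anti-hermitian, squares to $-\Id$, any two distinct basis elements either commute or anti-commute, and their product (when $k\neq l$) is $\pm 1$ or $\pm j$ times some basis element which is never proportional to $\Id$.

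First, I would handle the constraint $a_i \in \mathbb{R}$. Since $\pmb{\Lambda}$ is unitary and $\pmb{\Lambda}^2=-\Id$, we have $\pmb{\Lambda}^{-1}=-\pmb{\Lambda}$, and combined with $\pmb{\Lambda}^{-1}=\pmb{\Lambda}^H$ this forces $\pmb{\Lambda}^H=-\pmb{\Lambda}$, i.e. $\pmb{\Lambda}$ is anti-hermitian. Plugging the expansion \eqref{rep1} into $\pmb{\Lambda}^H=-\pmb{\Lambda}$ and using $\pmb{\alpha}_i^H=-\pmb{\alpha}_i$ yields $\sum_i (\bar a_i-a_i)\pmb{\alpha}_i=\zerov$, so by linear independence of the basis, $a_i\in\mathbb{R}$ for every $i$.

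Next, I would square the expansion and organise the double sum as
\begin{equation*}
\pmb{\Lambda}^2=\sum_{i=1}^{n^2} a_i^2\,\pmb{\alpha}_i^2+\sum_{1\leq i<j\leq n^2} a_i a_j\bigl(\pmb{\alpha}_i\pmb{\alpha}_j+\pmb{\alpha}_j\pmb{\alpha}_i\bigr).
\end{equation*}
The diagonal part collapses to $-\bigl(\sum_i a_i^2\bigr)\Id$ by property 1 of \textbf{Proposition 3}. For the off-diagonal part, property 3 splits the index pairs into two classes: pairs $(i,j)$ for which $\pmb{\alpha}_i\pmb{\alpha}_j=-\pmb{\alpha}_j\pmb{\alpha}_i$, whose contribution vanishes, and commuting pairs, which contribute $2a_i a_j\,\pmb{\alpha}_i\pmb{\alpha}_j$. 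Hence
\begin{equation*}
\pmb{\Lambda}^2=-\Bigl(\sum_{i=1}^{n^2} a_i^2\Bigr)\Id+2\sum_{(i,j)\in\Pc} a_i a_j\,\pmb{\alpha}_i\pmb{\alpha}_j,
\end{equation*}
where $\Pc$ denotes the set of commuting pairs $i<j$.

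Now the key identification: by property 4 of \textbf{Proposition 3}, each product $\pmb{\alpha}_i\pmb{\alpha}_j$ with $i\neq j$ equals $\lambda\pmb{\alpha}_m$ for some $m\geq 2$, which never returns to the identity component. Consequently the term proportional to $\Id$ in $\pmb{\Lambda}^2$ comes solely from the diagonal contribution. Requiring $\pmb{\Lambda}^2=-\Id$ and invoking linear independence of the basis in $\Mc_n$ thus forces
\begin{equation*}
\sum_{i=1}^{n^2} a_i^2=1,\qquad \sum_{(i,j)\in\Pc} a_i a_j\,\pmb{\alpha}_i\pmb{\alpha}_j=\zerov,
\end{equation*}
which are exactly the stated conditions. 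The converse direction is obtained by running the same computation backwards: if $a_i\in\mathbb{R}$, then $\pmb{\Lambda}^H=-\pmb{\Lambda}$ is immediate from $\pmb{\alpha}_i^H=-\pmb{\alpha}_i$; the above identity for $\pmb{\Lambda}^2$ then reduces to $-\Id$ under the two conditions, and anti-hermiticity combined with $\pmb{\Lambda}^2=-\Id$ gives $\pmb{\Lambda}\pmb{\Lambda}^H=\Id$, so $\pmb{\Lambda}$ is unitary. I do not expect a genuine obstacle here; the one subtle point worth being explicit about is justifying that the $\Id$-coefficient of $\pmb{\Lambda}^2$ is untouched by the off-diagonal sum, which is where property 4 of \textbf{Proposition 3} is essential.
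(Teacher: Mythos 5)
Your proof is correct and follows essentially the same route as the paper's: anti-hermiticity plus linear independence of the basis forces $a_i\in\mathbb{R}$, and squaring the expansion isolates the $\Id$-coefficient from the commuting-pair sum via the fact that a product of two distinct basis elements is never proportional to $\Id$. Two minor remarks: the basis-element properties you invoke are stated as \textbf{Proposition 4} in the paper, not \textbf{Proposition 3}, and your explicit derivation of anti-hermiticity from unitarity together with $\pmb{\Lambda}^2=-\Id$, as well as your explicit treatment of the converse direction, are in fact slightly more complete than what the paper writes down.
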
 
\begin{proof}
$\pmb{\Lambda}$ is required to be anti-hermitian and to square to $-\Id$:
\begin{eqnarray}
\sum^{n^2}_{i=1} a_i\pmb{\alpha}_i+\sum^{n^2}_{i=1} a^*_i\pmb{\alpha}^H_i&=&\nonumber
\sum^{n^2}_{i=1} a_i\pmb{\alpha}_i-\sum^{n^2}_{i=1} a^*_i\pmb{\alpha}_i\\
&=&\sum^{n^2}_{i=1} \left(a_i-a^*_i\right)\pmb{\alpha}_i=\zerov
\end{eqnarray}
From the linear independence property of the basis, the only solution to the above equation is that $a_i=a^*_i\ \forall 1\leq i \leq n^2 $ which proves the first claim of our proposition.
\begin{eqnarray}
\left(\sum^{n^2}_{k=1} a_k\pmb{\alpha}_k\right)^2+\Id&=&  \nonumber
\left(\sum^{n^2}_{k=1}a_k\pmb{\alpha}_k\right)\left(\sum^{n^2}_{l=1}a_l\pmb{\alpha}_l\right)+\Id\\ \nonumber
&=&\sum^{n^2}_{k=1}a^2_k\left(\pmb{\alpha}_k\right)^2+\underset{l\neq k}{\sum^{n^2}_{k=1}}
a_ka_l\pmb{\alpha}_k\pmb{\alpha}_l+\Id\\ \nonumber
&=&\left(1-\sum^{n^2}_{k=1}a^2_k\right)\Id+\underset{l\neq k}{\sum^{n^2}_{k=1}}
a_ka_l\pmb{\alpha}_k\pmb{\alpha}_l
\end{eqnarray}
The anticommuting pairs in the second term of the last equation will vanish, and the above equation may be expressed as:
\begin{equation}
\left(1-\sum^{n^2}_{k=1}a^2_k\right)\Id+2\sum^{}_{k,l>k} a_ka_l\pmb{\alpha}_k\pmb{\alpha}_l=\zerov 
\end{equation}
where the second summation is held only over commuting pairs of basis elements. From the properties of the basis elements, we know that the product of any pair of distinct basis elements is an element of the basis (with exception of the identity matrix that cannot be expressed as a product of any distinct pair of basis elements according to \textbf{Proposition 4}). Thus, the only solution for the above equation is:
\begin{equation}
\sum^{}_{k,l>k} a_ka_l\pmb{\alpha}_k\pmb{\alpha}_l=\zerov;\ \sum^{n^2}_{i=1} a_i^2=1 
\end{equation}
which proves \textbf{Proposition 5}.\end{proof}
An illustrative example is the following:
\begin{equation}
\pmb{\Lambda}=\frac{1}{2}\left(\Rm_1-\Rm_3+\Rm_1\Rm_2+\Rm_2\Rm_3\right). 
\end{equation}
The above example of $\pmb{\Lambda}$ satisfies \textbf{Proposition 5} as the only commuting pairs are $\left\lbrace \Rm_1,\Rm_2\Rm_3\right\rbrace $ and  $\left\lbrace -\Rm_3,\Rm_1\Rm_2\right\rbrace $ and the product of the first pair is the additive inverse of the product of the second pair with $\sum^{}_{i}a^2_i=1$.
\begin{myprop}
The UW-$g$-group decodable codes with single thread weight matrices where the non-zero elements $\in\lbrace \pm 1,\pm j\rbrace$ for $n=2^a$ antennas can exist only for $\Gamma$ sets where the $\pmb{\Lambda}$ matrices are expressed as:
\begin{equation*}
\pmb{\Lambda}=\sum^{n^2}_{k=1}a_k\pmb{\alpha}_k, a_k\in\left\lbrace \frac{n-2\kappa}{n}:\kappa\in\mathbb{N}\right\rbrace,\ \sum^{n^2}_{k=1}a^2_k=1  
\end{equation*}
\end{myprop}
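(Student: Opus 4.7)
The plan is to invert the basis expansion via the Frobenius inner product and then exploit the single-thread structure to pin the coefficients down to the stated integer-type values. First, I would establish that the basis $\{\pmb{\alpha}_k\}$ of \eqref{AHbasis} is orthogonal under $\langle A,B\rangle=\text{tr}(A^H B)$: for $k\ne l$, Proposition~4 gives $\pmb{\alpha}_k^H\pmb{\alpha}_l=\lambda\pmb{\alpha}_m$ with $\pmb{\alpha}_m$ a non-identity tensor of Paulis, which has vanishing trace since each of $\sigma_1,\sigma_2,\sigma_3$ is traceless. Together with $\|\pmb{\alpha}_k\|_F^2=n$, this yields $a_k=\tfrac{1}{n}\text{tr}(\pmb{\alpha}_k^H\pmb{\Lambda})$, so the task reduces to controlling this trace.

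The key structural observation is that both factors in $\pmb{\alpha}_k^H\pmb{\Lambda}$ are single-thread matrices with non-zero entries in $\{\pm 1,\pm j\}$: the hypothesis supplies this for $\pmb{\Lambda}=\Am_k\Bm_l^H$, while the construction \eqref{reps}--\eqref{AHbasis} supplies it for $\pmb{\alpha}_k$ since each of $\Id_2,\sigma_1,\sigma_2,\sigma_3$ is monomial with entries in this set and the prefactor $j^{\delta(m)}$ lies in $\{\pm 1,\pm j\}$. Hence $\pmb{\alpha}_k^H\pmb{\Lambda}$ is itself monomial with entries in $\{\pm 1,\pm j\}$, and its trace is a sum of $N$ diagonal entries drawn from that set, where $N$ is the number of fixed points of the underlying permutation. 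Proposition~5 has already forced $a_k\in\mathbb{R}$, so the $\pm j$ diagonal contributions cancel in pairs; denoting by $N_+$ and $N_-$ the numbers of $+1$ and $-1$ diagonal entries, one gets $na_k=N_+-N_-$, with $N_++N_-$ sharing the parity of $N$.

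The decisive step, and the one I expect to carry the bulk of the argument, is the parity claim that $N$ is always even. The permutation underlying any $\pmb{\alpha}_k$ is a translation $\ell\mapsto\ell+c$ on $(\mathbb{Z}/2\mathbb{Z})^a$, since each tensor factor is either diagonal ($\Id_2$ or $\sigma_3$) or a bit-flip ($\sigma_1$ or $\sigma_2$). Meanwhile, $\pmb{\Lambda}^2=-\Id$ forces the permutation $\pi$ of $\pmb{\Lambda}$ to be an involution, and $N=|\{\ell:\pi(\ell)=\ell+c\}|$. For $c\ne 0$, applying $\pi$ to $\pi(\ell)=\ell+c$ and using $\pi^2=\text{id}$ together with $-c=c$ in characteristic two gives $\pi(\ell+c)=(\ell+c)+c$, so $\ell\mapsto\ell+c$ is a fixed-point-free involution on the counted set, forcing $N$ to be even. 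For $c=0$, $N$ counts the fixed points of $\pi$, which equals $n-2t$ with $t$ the number of $2$-cycles, again even since $n=2^a$ is.

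Combining the above, $na_k$ is an even integer with $|na_k|\le N_++N_-\le N\le n$, so $na_k\in\{n,n-2,\ldots,-n\}$, equivalently $a_k=(n-2\kappa)/n$ for some $\kappa\in\{0,1,\ldots,n\}\subset\mathbb{N}$. The normalization $\sum_k a_k^2=1$ is inherited verbatim from Proposition~5. The only genuine obstacle is the parity claim on $N$, which hinges on the interplay between the involution structure of $\pi$ and the translation structure of the Pauli-tensor permutation; once this is in hand, the rest of the argument is bookkeeping.
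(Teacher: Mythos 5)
Your proof is correct and follows essentially the same route as the paper's appendix: both extract $a_k=\tfrac{1}{n}\mathrm{tr}\left(\pmb{\alpha}_k^H\pmb{\Lambda}\right)$ from the trace-orthogonality of the basis \eqref{AHbasis}, observe that $\pmb{\Lambda}$ is a single-thread monomial matrix with entries in $\lbrace\pm 1,\pm j\rbrace$, and reduce the claim to the evenness of the overlap between the thread of $\pmb{\Lambda}$ and that of $\pmb{\alpha}_k$, after which reality of $a_k$ and $\vert na_k\vert\leq n$ give $a_k=(n-2\kappa)/n$. Your justification of the parity step (the involution $\pi$ of $\pmb{\Lambda}$ versus the translation $\ell\mapsto\ell+c$ on $(\mathbb{Z}/2\mathbb{Z})^a$, with the case $c=0$ handled separately via the evenness of $n=2^a$) is actually more explicit than the paper's appeal to the threads being symmetric and disjoint, which leaves the diagonal-thread case implicit.
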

\begin{proof}
see Appendix.
\end{proof}
By using \textbf{Propositions 5} and \textbf{6}, we now have the possibility to exhaustively construct all the possible $\Gamma$ sets that satisfy \textbf{Proposition 3}.


\section{Results}
In this section we provide examples of the application of the proposed method to find the maximum achievable rate of 4$\times$4 UW-$g$-group decodable STBCs where the weight matrices are required to be single thread matrices with non-zero entries $\in \lbrace\pm 1,\pm j\rbrace$. The weight matrices were found through exhaustive computer search.
For the case of four transmit antennas $\left(a=2\right)$, \textbf{Proposition 6} reduces to:
\begin{equation}
\pmb{\Lambda}=\left\lbrace \begin{array}{lc} \pm \pmb{\alpha}_k,& k\in\left\lbrace 1,2,\ldots,16\right\rbrace \\ 
\sum^{4}_{i=1}a_{k_i}\pmb{\alpha}_{k_i},& a_{k_i}\in\left\lbrace \pm \frac{1}{2}\right\rbrace 
\end{array}\right.
\end{equation}
For the symmetric UW-2-group decodable STBCs, we found that the maximum achievable rate is limited to 5/4 cspcu (Complex Symbol Per Channel Use) (see Table~\ref{2g}). However, if the symmetry restriction is relaxed, one can easily obtain rate-$\frac{n^2+1}{2n}$ UW-2-group decodable STBC for $n$ transmit antennas \cite{R>1} giving rise to a rate-17/8 cspcu UW-2-group decodable STBC for four transmit antennas.

For symmetric UW-3-group decodable codes, the maximal achievable rate is proved to be 3/4 cspcu \cite{3SYMS_BND}.
If the symmetry restriction is relaxed, we found that the maximum achievable rate is limited to 1 cspcu (see Table~\ref{3g}). For symmetric UW-4-group decodable codes, it is known that the maximum achievable rate is 1 cspcu \cite{2SYMS_BND}. Examples of these codes may be found in \cite{MULT_GR,MULTI_SYMs,SAST,MDC}.

For the 2-group decodable rate-5/4 code, $g=2,n_1=n_2=5$ thus the decoding complexity for square QAM constellations is of order $\sum^{2}_{i=1}\sqrt{M}^{n_i-1}=2M^2$. For the case of non-rectangular constellations, the rate-5/4 code is no longer 2-group decodable due to the entanglement of the real and imaginary parts of the transmitted complex symbols. In that case, we may use the conditional detection \cite{2TX_MPLX_ORTH} to evaluate the ML estimate of $\left(x_1,\ldots,x_4\right)$ and $\left(x_5,\ldots,x_8\right)$ separately (thanks to the Quasi-orthogonality structure) for a given value of $\left(x_9,x_{10}\right)$. Therefore the decoding complexity is of order $2M^3$. 

For the 3-group decodable rate-1 code, one has $g=3,n_1=n_2=2,n_3=4$, and therefore the worst-case decoding complexity order of square QAM constellations is $\sum^{3}_{i=1}\sqrt{M}^{n_i-1}=2\sqrt{M}+M^{1.5}$. For the case of non-rectangular constellations, the rate-1 3-group decodable code maintains its multi-group decodability structure, but with an increase of decoding complexity order to $\sum^{3}_{i=1}M^{n_i/2}=2M+M^2$. These results are summarized in Table \ref{complexity}.

It is worth noting that the coding gain of the proposed codes is equal to zero, but the full diversity may still be ensured by applying a constellation rotation to each group of symbols, which does not affect the multi-group decodability structure, and hence the decoding complexity remains unchanged. Due to the prohibitive complexity of the numerical optimization of the rotation matrix and specially for high-order constellations, the coding gain optimization and the BER performance may be the subject of subsequent work. 

In addition, the limits on the achievable rates of the special type of UW-$g$-group decodable STBCs which have been investigated in this paper suggest to resort to the so called fast decodable codes which are conditionally $g$-group decodable (e.g. \cite{GLOBECOM11}) thus enabling the use of the conditional detection technique in order to improve the rate/performance/complexity tradeoff.
\section{Conclusion}
In this paper, we addressed the problem of finding the maximum achievable rates of a special type of UW-$g$-group decodable STBCs for $2^a$ transmit antennas. For this purpose, we extended the previously proposed approach of finding UW-2-group decodable codes to search for UW-$g$-group decodable codes in a recursive fashion. The new construction method  was then applied to the type of weight matrices usually proposed in the literature. It was found that the maximum achievable rate for the 4$\times$4 symmetric UW-2-group decodable codes is 5/4 cspcu and that the maximum achievable rate for the 4$\times$4 non-symmetric UW-3-group decodable codes is 1 cspcu. 

\section*{Appendix}
In the following, we will prove \textbf{Proposition 6}.
We have:
\begin{equation}
\text{tr}\left(\Am\right)=0,\ \forall\Am\in \Bc_{2a}\setminus\lbrace j\Id \rbrace.  
\end{equation}
This can be easily verified from \eqref{sigmas} by noting that:
\begin{equation}
\begin{split}
\sigma_1\sigma_2=j\sigma_3;\ \sigma_1\sigma_3&=-j\sigma_2;\ \sigma_2\sigma_3=j\sigma_1\\
\text{and}\  \text{tr}\left(\sigma_i\right)&=0;\ \forall\ i\in\left\lbrace1,2,3\right\rbrace. 
\end{split}
\end{equation}
And from \eqref{reps} we have $\Am\in \Bc_{2a}\setminus\lbrace j\Id \rbrace$ may be expressed as:
\begin{equation}
\Am=\lambda\Xi_1\otimes\Xi_2\ldots\otimes \Xi_a
\end{equation}
where $\lambda \in \left\lbrace \pm 1,\pm j \right\rbrace $, $\Xi_i\in\left\lbrace \sigma_1,\sigma_2,\sigma_3,\Id \right\rbrace $ but $\left\lbrace\Xi_i:i=1,\ldots,a\right\rbrace $ cannot be equal simultaneously to $\Id$ as the set $\Bc_{2^a}$ is linearly independent over $\mathbb{C}$. 
Consequently, one has:
\begin{equation}
\text{tr}\left(\Am\right)=\lambda\text{tr}\left(\Xi_1\otimes\Xi_2\ldots\otimes \Xi_a\right)= \lambda\prod^{a}_{i=1}\text{tr}\left(\Xi_i\right)= 0
\end{equation}
as at least we have $\Xi_k\in\left\lbrace\sigma_1,\sigma_2,\sigma_3\right\rbrace,k\in\left\lbrace 1,\ldots,a\right\rbrace$. Therefore, thanks to \textbf{Proposition 4} we may write:
\begin{equation}
\text{tr}\left(\pmb{\alpha}^H_m\pmb{\alpha}_n\right)\Big\vert_{m\neq n}=\text{tr}\left(\pmb{\alpha}_k \right)\Big\vert_{k\neq 1}=0 
\label{treq}
\end{equation}
Eq~\eqref{treq} may be used to find the coefficients $a_{i=1,2,\ldots,2^{2a}}$ in Eq~\ref{rep1} as below:
\begin{equation}
\text{tr}\left(\pmb{\alpha}^H_k \pmb{\Lambda}\right)=na_k+
\underbrace{\underset{i \neq k}{\sum^{n^2}_{i=1}}a_i\pmb{\alpha}^H_k\pmb{\alpha}_i}_{0}=na_k.
\label{tr1}
\end{equation}
where $n=2^a$. On the other hand, it can be verified that any element of the basis over $\mathbb{C}$ in \eqref{AHbasis} may be expressed as:
\begin{equation}
\pmb{\alpha}_k=\Tm_i\Dm_k,\ i\in\left\lbrace1,2,\ldots n\right\rbrace,k\in\left\lbrace1,2,\ldots,n^2 \right\rbrace  
\end{equation} 
\begin{table*}
\setcounter{equation}{48}
\begin{equation}
\pmb{\Lambda}=\left\lbrace \begin{array}{ll} \Tm_{k_1}\Dm_{k_1} & k_1\in\left\lbrace1,2,\ldots,n\right\rbrace \\
\Tm_{k_1}\Dm_{k_1}+\Tm_{k_2}\Dm_{k_2}&k_1,k_2\in\left\lbrace 1,2,\ldots,n\right\rbrace,\ k_1\neq k_2\\ 
\Tm_{k_1}\Dm_{k_1}+\Tm_{k_2}\Dm_{k_2}+\Tm_{k_3}\Dm_{k_3}&k_1,k_2,k_3\in\left\lbrace1,2,\ldots,n\right\rbrace,\ k_1\neq k_2\neq k_3\\ 
\vdots&\vdots\\
\sum^{n/2}_{i=1}\Tm_{k_i}\Dm_{k_i}&k_1,k_2,\ldots k_{n/2}\in\left\lbrace1,2,\ldots,n\right\rbrace,\ k_1\neq k_2\neq \ldots k_{n/2}
 \end{array}\right.
\label{Lambda}
\end{equation}
\hrule
\end{table*}
where $\Tm_i$ is one of the disjoint permutation matrices that indicate the threads occupied by the basis elements \eqref{AHbasis} and $\Dm_k$ is a diagonal matrix with entries $\in\left\lbrace\pm 1,\pm j\right\rbrace$. For instance, the four permutation matrices for the case of four transmit antennas denoted  $\Tm_1,\Tm_2,\Tm_3,\Tm_4$ are shown in Table~\ref{perms}. It follows directly from the properties of basis elements \eqref{AHbasis} that the matrices $\Tm_i,\ i\in\left\lbrace1,2,\ldots n\right\rbrace$ have the following properties:
\setcounter{equation}{46}
\begin{eqnarray}
\Tm^T_k&=&\Tm_k\\
\Tm_{k_1}\Tm_{k_2}&=&\left\lbrace \begin{array}{ll}\Tm_{k_3}&k_1\neq k_2\neq k_3\in\left\lbrace 2,3,\ldots n\right\rbrace\\
\Id & k_1=k_2  \end{array}\right.
\end{eqnarray}
From the definition of the matrix $\pmb{\Lambda}$ in Eq ~\ref{lambdakl_1}, it can be verified that restricting the weight matrices $\Am$ and $\Bm$ to single-thread matrices with entries $\in \left\lbrace\pm 1,\pm j\right\rbrace$ turns out that the corresponding matrix $\pmb{\Lambda}$ is evenly a single-thread matrix with entries $\in \left\lbrace\pm 1,\pm j\right\rbrace$, and thus may expressed according to \eqref{Lambda}. Where $\Dm_{k_i},\ k_i\in\left\lbrace1,2,\ldots,n\right\rbrace$ are diagonal matrices with entries $\in \left\lbrace 0,\pm 1,\pm j\right\rbrace$ such that $\Dm_{k_m}\Dm_{k_n}=\zerov,\ \forall\ k_m\neq k_n$. This is because the matrix $\pmb{\Lambda}$ is required to be anti-hermitian (see \textbf{Proposition 2}), then it cannot have an odd number of common positions with threads $\Tm_1,\Tm_2,\ldots,\Tm_n$. By common positions with $\Tm_k$, we mean the number entries in the matrix $\pmb{\Lambda}$ that corresponds to a non-zero entry in $\Tm_k$. This can be verified by noting that the matrices $\Tm_1,\Tm_2,\ldots,\Tm_n$ are symmetric and disjoint. Moreover, the matrix $\pmb{\Lambda}$ is required to be a single-thread matrix. For the case of four transmit antennas, the above reduces to:
\setcounter{equation}{49}
\begin{equation*}
\pmb{\Lambda}=\left\lbrace \begin{array}{ll} \Tm_{k_1}\Dm_{k_1} & k_1\in\left\lbrace1,2,3,4\right\rbrace \\
\Tm_{k_1}\Dm_{k_1}+\Tm_{k_2}\Dm_{k_2}&k_1,k_2\in\left\lbrace 1,2,3,4\right\rbrace,\ k_1\neq k_2\\ 
\end{array}\right.
\end{equation*}
It is worth noting that the matrix $\pmb{\Lambda}$ being single-thread implies that in the first case all of the diagonal elements of the matrix $\Dm_{k1}$ are strictly non-zero and in the second case each of the diagonal matrices $\Dm_{k_1}$ and $\Dm_{k_2}$ has strictly two non-zero elements such that $\Dm_{k_1}\Dm_{k_2}=\zerov$. Let $\pmb{\alpha}_k=\Tm_m\Dm_m$, thus:
\begin{equation}
\begin{split}
\text{tr}\left(\pmb{\alpha}^H_k\pmb{\Lambda}\right)&=
\sum^{n/2}_{i=1}\text{tr}\left(\Tm_m\Tm_{k_i}\Dm_{k_i}\Dm^H_m\right)\\
&=\left\lbrace \begin{array}{ll}0&\ m\neq k_1\neq k_2\neq\ldots k_{n/2}\\
\text{tr}\left(\Dm_{k_i}\Dm^H_m\right)& k_i=m
\end{array}\right.
\end{split}
\label{tr2}
\end{equation}
Recalling that $a_k$ is restricted to be real (see \textbf{Proposition 5}) and equating \eqref{tr1} and \eqref{tr2} one easily obtains:
\begin{equation}
a_k\in\left\lbrace \frac{n-2\kappa}{n}:\ \kappa\in\mathbb{N}\right\rbrace
\end{equation}
On the other hand, according to \textbf{Proposition 5} one has $\sum^{n^2}_{k=1}a^2_k=1$ thus completing the proof.

\begin{table}[h!]
\topcaption{The basis elements of the 4$\times$4 matrices over $\mathbb{C}$}
\centering
\begin{tabular}{|c|c|c|c|}
\hline
$\pmb{\alpha}_1=j\Id$&$\pmb{\alpha}_2=\Rm_1,\nonumber$&$\pmb{\alpha}_3=\Rm_2$&$\pmb{\alpha}_4=\Rm_3$\\
\hline
$\pmb{\alpha}_5=\Rm_4$&$\pmb{\alpha}_6=\Rm_1\Rm_2$&$\pmb{\alpha}_7=\Rm_1\Rm_3$&$\pmb{\alpha}_8=\Rm_1\Rm_4$\\
\hline
$\pmb{\alpha}_9=\Rm_2\Rm_3$&$\pmb{\alpha}_{10}=\Rm_2\Rm_4$&$\pmb{\alpha}_{11}=\Rm_3\Rm_4$&$\pmb{\alpha}_{12}=j\Rm_1\Rm_2\Rm_3$\\
\hline
$\pmb{\alpha}_{13}=j\Rm_1\Rm_2\Rm_4$&$\pmb{\alpha}_{14}=j\Rm_1\Rm_3\Rm_4$&$\pmb{\alpha}_{15}=j\Rm_2\Rm_3\Rm_4$&$\pmb{\alpha}_{16}=j\Rm_1\Rm_2\Rm_3\Rm_4$\\
\hline
\end{tabular}
\label{basis_mat}
\end{table}

\begin{table}[h!]
\topcaption{The weight matrices of the rate-5/4 UW-2-group decodable code}
\centering
\begin{tabular}{|c|c|}
\hline
$\Am_1= \begin{bmatrix}0&0&0&1\\0&1&0&0\\1&0&0&0\\0&0&1&0\end{bmatrix}$&
$\Bm_1= \begin{bmatrix}0&0&0&-j\\0&j&0&0\\j&0&0&0\\0&0&-j&0\end{bmatrix}$\\
$\Am_2= \begin{bmatrix}0&0&0&1\\0&1&0&0\\-1&0&0&0\\0&0&-1&0\end{bmatrix}$&
$\Bm_2= \begin{bmatrix}0&0&0&-j\\0&j&0&0\\-j&0&0&0\\0&0&j&0\end{bmatrix}$\\
$\Am_3= \begin{bmatrix}0&0&0&1\\0&1&0&0\\-1&0&0&0\\0&0&1&0\end{bmatrix}$&
$\Bm_3= \begin{bmatrix}0&0&0&-j\\0&-j&0&0\\j&0&0&0\\0&0&-j&0\end{bmatrix}$\\
$\Am_4= \begin{bmatrix}0&0&0&1\\0&1&0&0\\0&0&-j&0\\-j&0&0&0\end{bmatrix}$&
$\Bm_4= \begin{bmatrix}0&-1&0&0\\0&0&0&1\\j&0&0&0\\0&0&-j&0\end{bmatrix}$\\
$\Am_5= \begin{bmatrix}0&0&0&1\\0&1&0&0\\0&0&1&0\\-1&0&0&0\end{bmatrix}$&
$\Bm_5= \begin{bmatrix}0&-j&0&0\\0&0&0&-j\\j&0&0&0\\0&0&-j&0\end{bmatrix}$\\
\hline
\end{tabular}
\label{2g}
\end{table}

\begin{table}[h!]
\topcaption{The weight matrices of the rate-1 UW-3-group decodable code}
\centering
\begin{tabular}{|c|c|cc|}
\hline
$\Am_1=\begin{bmatrix}-j&0&0&0\\0&j&0&0\\0&0&-1&0\\0&0&0&1\end{bmatrix}$&
$\Bm_1=\begin{bmatrix}0&-1&0&0\\1&0&0&0\\0&0&0&j\\0&0&-j&0\end{bmatrix}$&
$\Cm_1=\begin{bmatrix}1&0&0&0\\0&1&0&0\\0&0&0&1\\0&0&1&0\end{bmatrix}$;&
$\Cm_2=\begin{bmatrix}0&j&0&0\\j&0&0&0\\0&0&0&1\\0&0&1&0\end{bmatrix}$\\
$\Am_2=\begin{bmatrix}-j&0&0&0\\0&j&0&0\\0&0&1&0\\0&0&0&-1\end{bmatrix}$&
$\Bm_2=\begin{bmatrix}0&-1&0&0\\1&0&0&0\\0&0&0&-j\\0&0&j&0\end{bmatrix}$&
$\Cm_3=\begin{bmatrix}0&-j&0&0\\-j&0&0&0\\0&0&0&1\\0&0&1&0\end{bmatrix}$;&
$\Cm_4=\begin{bmatrix}1&0&0&0\\0&1&0&0\\0&0&-j&0\\0&0&0&-j\end{bmatrix}$\\
\hline
\end{tabular}
\label{3g}
\end{table}

\begin{table}[h!]
\centering
\topcaption{Summary of results}
\begin{tabular}{|c|c|c|c|c}
\hline
number of groups&maximum rate&\multicolumn{2}{c|}{Complexity order of}\\
\cline{3-4} &&square QAM&non-rectangular QAM\\
\hline
2 (sym)&5/4&$2M^2$&$2M^3$\\
\hline
2 (non-sym)&17/8 \cite{R>1}&$M^{5.5}$&$6M^{6.5}$\\
\hline
3 (sym)&3/4 \cite{3SYMS_BND}&$3\sqrt{M}$&$3M$\\
\hline
3 (non-sym)&1&$2\sqrt{M}+M^{1.5}$&$2M+M^2$\\
\hline
\end{tabular}
\label{complexity}
\end{table}

\begin{table}[h!]
\centering
\topcaption{The four permutation matrices for the case of 4$\times$4 matrices}
\begin{tabular}{|c|c|c|c|}
\hline
$\Tm_1=\begin{bmatrix}1&0&0&0\\0&1&0&0\\0&0&1&0\\0&0&0&1\end{bmatrix}$&
$\Tm_2=\begin{bmatrix}0&1&0&0\\1&0&0&0\\0&0&0&1\\0&0&1&0\end{bmatrix}$&
$\Tm_3=\begin{bmatrix}0&0&1&0\\0&0&0&1\\1&0&0&0\\0&1&0&0\end{bmatrix}$&
$\Tm_4=\begin{bmatrix}0&0&0&1\\0&0&1&0\\0&1&0&0\\1&0&0&0\end{bmatrix}$\\
\hline
\end{tabular}
\label{perms}
\end{table}

\end{document}